\newcommand{\R}{\mathbb{R}}
\newcommand{\N}{\mathbb{N}}
\newcommand{\Z}{\mathbb{Z}}
\newcommand{\Q}{\mathbb{Q}}
\theoremstyle{plain}
\newtheorem{Th}{Theorem}
\newtheorem{Le}{Lemma}
\theoremstyle{definition}
\newtheorem{Rem}{Remark}
\newtheorem{Def}{Definition}
\title[Landau Hamiltonian perturbed by a generic periodic
potential]{Absolute continuity of the spectrum of a Landau Hamiltonian
  perturbed by a generic periodic potential}
\author{Fr{\'e}d{\'e}ric Klopp}
\address[Fr{\'e}d{\'e}ric Klopp]{LAGA, U.M.R. 7539 C.N.R.S, Institut Galil{\'e}e,
  Universit{\'e} Paris-Nord, 99 Avenue J.-B. Cl{\'e}ment, F-93430
  Villetaneuse, France\\ et\\ Institut Universitaire
  de France}
\email{\href{mailto:klopp@math.univ-paris13.fr}{klopp@math.univ-paris13.fr}}
\keywords{}
\subjclass{}
\thanks{Part of this work was done during the conference ``Spectral
  analysis of differential operators'' held at the MFO, Oberwolfach
  (29/11-03/12/2004); it is a pleasure to thank T. Weidl and A.
  Sobolev, the organizers of the conference, for their invitation to
  participate to the meeting as well as D. Elton for stimulating
  discussions.\\ It is also a pleasure to thank the Institute of
  Mathematics of Hanoi, where this work was completed, for its kind
  hospitality.}
\begin{document}
%
\begin{abstract}
  Consider $\Gamma$, a non-degenerate lattice in $\R^2$ and a constant
  magnetic field $B$ with a flux though a cell of $\Gamma$ that is a
  rational multiple of $2\pi$.  We prove that for a generic
  $\Gamma$-periodic potential $V$, the spectrum of the Landau
  Hamiltonian with magnetic field $B$ and periodic potential $V$ is
  purely absolutely continuous.
  \vskip.5cm\noindent \textsc{R{\'e}sum{\'e}.}
  On consid{\`e}re $\Gamma$, un r{\'e}seau non-d{\'e}g{\'e}n{\'e}r{\'e} dans $\R^2$ et un
  champ magn{\'e}tique constant $B$ dont le flux {\`a} travers une cellule du
  r{\'e}seau est un multiple rationnel de $2\pi$. On d{\'e}montre que, pour un
  potentiel $\Gamma$-p{\'e}riodique $V$ continu g{\'e}n{\'e}rique, le spectre du
  hamiltonien de Landau de champ magn{\'e}tique constant $B$ perturb{\'e} par
  le potentiel p{\'e}riodique $V$ est purement absolument continu.
\end{abstract}
\maketitle
Written in the Coulomb gauge, on $L^2(\R^2)$, the Landau Hamiltonian is
defined by
\begin{equation}
  \label{eq:1}
  H= (-i\nabla-A)^2,\quad\text{ where }\quad A(x_1,x_2)=
  \frac{B}{2}(-x_2,x_1),
\end{equation}
Let $\Gamma=\oplus_{i=1}^2\Z e_i$ be a non-degenerate lattice such
that
\begin{equation}
  \label{eq:2}
  B\, e_1\wedge e_2\in2\pi\Q.
\end{equation}
Define the set of real valued, continuous, $\Gamma$-periodic functions
\begin{equation}
  \label{eq:3}
  C_\Gamma=\{V\in C(\R^2,\R);\ \forall x\in\R^2,\
  \forall\gamma\in\Gamma,\ V(x+\gamma)=V(x)\}.
\end{equation}
The space $C_\Gamma$ is endowed with the uniform topology, the
associated norm being denoted by $\|\cdot\|$.\\
Our main result is
\begin{Th}
  \label{thr:1}
  There exists a $G_\delta$-dense subset of $C_\Gamma$ such that,
  for $V$ in this set, the spectrum of $H(V):=H+V$ is purely
  absolutely continuous.
\end{Th}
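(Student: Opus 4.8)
\medskip
\noindent\textbf{Strategy of the proof.}

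The plan is to combine the magnetic Bloch--Floquet decomposition with a Baire category argument. Writing $B\,e_1\wedge e_2=2\pi p/q$ with $\gcd(p,q)=1$, the magnetic translations commuting with $H$ commute with one another along a sublattice $\tilde\Gamma\subset\Gamma$ of index $q$, and also commute with multiplication by any $V\in C_\Gamma$; decomposing $L^2(\R^2)$ over the Brillouin torus $\T^{*}=\R^2/\tilde\Gamma^{*}$ of $\tilde\Gamma$ and gauging away the $\theta$-dependence of the domain, one obtains $H(V)\cong\int^{\oplus}_{\T^{*}}H_\theta(V)\,d\theta$ with $\theta\mapsto H_\theta(V)$ a real analytic family of self-adjoint operators with compact resolvent. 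Its eigenvalues, the band functions $E_1(\theta,V)\le E_2(\theta,V)\le\dots\to+\infty$, are given locally by real analytic branches; pushing the $L^{2}(\T^{*})$ spectral data forward through them shows that the spectral measure of $H(V)$ is purely absolutely continuous away from the energies at which some branch is constant on $\T^{*}$ (a \emph{flat band}), and that $\lambda$ is an eigenvalue of $H(V)$ --- necessarily of infinite multiplicity, its eigenspace being $\int^{\oplus}_{\T^{*}}\ker(H_\theta(V)-\lambda)\,d\theta$ --- precisely when $\lambda\in\mathrm{spec}(H_\theta(V))$ for every $\theta$. Hence Theorem~\ref{thr:1} is equivalent to the assertion that $H(V)$ has no flat band for $V$ in a $G_\delta$-dense subset of $C_\Gamma$.

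For $n\in\N$ set $\mathcal O_n=\{V\in C_\Gamma:\ H(V)\ \text{has no eigenvalue in}\ [-n,n]\}$, so $\bigcap_{n}\mathcal O_n$ is the set of $V$ with purely absolutely continuous spectrum; since $(C_\Gamma,\|\cdot\|)$ is a Banach space it suffices, by Baire's theorem, to prove that each $\mathcal O_n$ is open and dense. Openness is a compactness statement: if $V_k\to V$ with flat bands $\lambda_k\in[-n,n]$, then after extraction $\lambda_k\to\lambda\in[-n,n]$; as $\|V_k-V\|\to0$ forces norm-resolvent convergence of $H_\theta(V_k)$ to $H_\theta(V)$ uniformly in $\theta$, no gap of $\mathrm{spec}(H_\theta(V))$ can contain $\lambda$, so $\lambda$ is a common eigenvalue of all the $H_\theta(V)$; the associated fibre projections being analytic in $\theta$ off a null set, $\int^{\oplus}\ker(H_\theta(V)-\lambda)\,d\theta$ is a nonzero (infinite-dimensional) eigenspace and $\lambda$ is a flat band of $H(V)$. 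Thus $C_\Gamma\setminus\mathcal O_n$ is closed.

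The heart of the matter is density of $\mathcal O_n$. Given $V_0\in C_\Gamma$ and $\varepsilon>0$: only finitely many band functions meet $[-n-1,n+1]$ for $V$ in a ball about $V_0$, so $H(V_0)$ has finitely many flat bands there, say at $\lambda_1,\dots,\lambda_r$, with fibre spectral projections $\Pi_i(\theta)$ of a.e.\ constant finite rank $d_i$. Adding $tW$ with $W\in C_\Gamma$ and following the cluster of $d_i$ bands issuing from $\lambda_i$: for $|t|$ small this cluster stays isolated uniformly in $\theta$, its analytic cluster operator equals $\lambda_i\,I+t\,\mathcal W_i(\theta)+O(t^{2})$ with the error uniform in $\theta$, where $\mathcal W_i(\theta)$ is multiplication by $W$ compressed to $\mathrm{Ran}\,\Pi_i(\theta)$, and one checks (letting $t\to0$ in $(c(t)-\lambda_i)/t$ for a hypothetical flat branch $c(t)$) that a band of the cluster can stay flat for all small $t\neq0$ only if the Hermitian matrix $\mathcal W_i(\theta)$ has a $\theta$-independent eigenvalue. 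Consequently, if one can exhibit a single $W\in C_\Gamma$ for which no $\mathcal W_i(\theta)$, $1\le i\le r$, has a constant eigenvalue, then $H(V_0+tW)$ has no flat band in $[-n,n]$ for all sufficiently small $t>0$ (bands away from the $\lambda_i$ cannot become flat under a small perturbation, since non-constancy of a band function is an open condition), and $V=V_0+tW\in\mathcal O_n$ lies within $\varepsilon$ of $V_0$.

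Everything thus reduces to producing such a $W$, and this is the main obstacle: one must show that the compressions of a suitably chosen periodic potential to the flat-band fibre spaces $\mathrm{Ran}\,\Pi_i(\theta)$ genuinely disperse in $\theta$. When $V_0=0$ the $\lambda_i$ are the Landau levels $B(2k+1)$, the $\Pi_i(\theta)$ are the magnetic-Bloch projections onto Landau levels (with kernels built from theta functions), and the compression of a potential onto the $k$-th level is a Toeplitz-type operator whose symbol involves the Laguerre polynomial $L_k$; its zeros are precisely what allows special potentials to leave certain Landau bands flat, which is why the theorem is stated for a generic, not an arbitrary, $V$. Carrying out this dispersion analysis --- uniformly over the finitely many flat bands meeting $[-n,n]$, and, via the perturbative reduction above together with the analyticity of the fibres in $V$, at a general $V_0$ rather than only at $0$ --- along with the routine bookkeeping of band degeneracies and crossings, is where the real work of the proof lies.
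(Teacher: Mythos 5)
Your setup --- the magnetic Bloch--Floquet decomposition over the dual of the index-$q$ sublattice, the reduction of absolute continuity to the absence of flat bands, and the Baire-category scheme of writing the good set as a countable intersection of open dense sets --- matches the paper's, and your first-order perturbation reduction (a band issuing from a flat band $\lambda_i$ stays flat under $V_0+tW$ only if the compression $\mathcal W_i(\theta)$ of $W$ to the fibre eigenspaces has a $\theta$-independent eigenvalue) is exactly the Feynman--Hellmann identity \eqref{eq:18} that the paper also uses. But the proposal stops precisely at the theorem's actual content. Exhibiting a $W$ whose compression has no constant eigenvalue --- or, contrapositively, showing that if \emph{every} periodic $U$ leaves the band flat to first order then something impossible happens --- is not ``dispersion analysis where the real work lies'' to be deferred; it is the whole point, and you do not prove it. Your remark about Laguerre polynomials and Toeplitz operators would at best treat a neighborhood of $V_0=0$ (where the flat bands are the Landau levels and the projections are explicit), and even there you give no argument; it says nothing about a general $V_0$, which is what density requires.

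The paper closes this gap with a soft, unique-continuation-type argument (Lemma~\ref{le:1}): if $\langle U\varphi(\theta),\varphi(\theta)\rangle$ is $\theta$-independent for all $U\in C_\Gamma$, then $\nabla_\theta|\varphi(x;\theta)|^2\equiv 0$, so the modulus of the (real-analytic, by analytic hypoellipticity) eigenfunction is $\theta$-independent. Decomposing the complement of the nodal set into domains and writing $\varphi=e^{ig_D}\psi_D$ there (Lemmas~\ref{le:4} and~\ref{le:3} control the nodal set and the phase jumps across it), one plugs this into the eigenvalue equation and finds that $\theta+\nabla_xg_D(x,\theta)$ must be $\theta$-independent, i.e.\ $g_D(x,\theta)=-\theta\cdot x+h_D(x)+c_D(\theta)$; this is incompatible with the magnetic quasi-periodicity condition \eqref{eq:20}, whose left-hand side would then carry a term $\theta\cdot\gamma'$ that the right-hand side cannot produce. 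A second ingredient you gloss over as ``routine bookkeeping'' is the paper's Lemma~\ref{le:2}: to run the first-order argument jointly analytically in $(\theta,V)$ one must first perturb to a point where the eigenvalue has locally constant multiplicity (is ``analytically degenerate''), which is achieved by a finite multiplicity-reduction process. Without these two steps your argument establishes the framework but not the theorem.
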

\noindent The absence of singular continuous spectrum can be obtained
from the sole analytic direct integral representation of $H(V)$ that we
use below (\cite{MR2120804}).\\
Our result is optimal in the sense that there are examples of periodic
$V$ for which the spectrum of $H$ contains eigenvalues such as $V$
constant. Of course, it is a natural question to wonder whether the
constant potential is the only periodic one for which the spectrum
exhibits eigenvalues.
\vskip.2cm The proof of Theorem~\ref{thr:1} consists in several
steps. We first reduce the problem via magnetic Floquet
theory. Therefore, we introduce the magnetic
translations~\cite{Sj:91}. For the two-dimensional, constant,
transverse magnetic field problem, they are defined as follows. For
any field strength $B\in\R$, any vector $\alpha\in\R^2$, and $f\in
C_0^\infty(\R^2)$, we define the magnetic translation by $\alpha$ to
be
\begin{equation}
  \label{eq:4}
  U_\alpha^Bf(x):=e^{\frac{iB}{2}x\wedge\alpha}
  f(x+\alpha)=e^{\frac{iB}{2}(x_1\alpha_2-x_2\alpha_1)}
  f(x+\alpha).
\end{equation}
For $(\alpha,\beta)\in(\R^2)^2$, we have the commutation relations
\begin{equation}
  \label{eq:5}
  U_\alpha^B U_\beta^B=e^{iB\,\alpha\wedge\beta}\,U_\beta^BU_\alpha^B .
\end{equation}
In a standard way, the family $\{U_\alpha^B;\;\alpha\in\R^2\}$ extends
to a projective unitary representation of $\R^2$ on $L^2(\R^2)$.  We
note that
\begin{equation}
  \label{eq:6}
  [U_\alpha^B,H]=0\quad\text{and}\quad[U_\alpha^B,V]=0.
\end{equation}
Let $(e_1,e_2)$ be a ``fundamental basis'' of the lattice $\Gamma$
i.e. $\Gamma=\oplus_{i=1}^2\Z e_i$. For $j\in\{1,2\}$, we define the
unitary $U_j^B:=U_{e_j}^B$ by~\eqref{eq:4}. By
assumption~\eqref{eq:2}, one has
\begin{equation}
  \label{eq:9}
  B e_1\wedge e_2=2\pi p/q,\quad\text{for}\quad(p,q)\in\Z\times\N,\
  p\wedge q=1.
\end{equation}
It follows from~\eqref{eq:2} and~\eqref{eq:5} that the unitary
operators $\{ (U_1^B)^q, U_2^B \}$ satisfy the commutation relation
\begin{equation*}
  (U_1^B)^qU_2^B=e^{iqBe_1\wedge e_2}U_2^B(U_1^B)^q
  =e^{i2\pi p}U_2^B (U_1^B)^q=U_2^B(U_1^B)^q,
\end{equation*}
so the pair generates an Abelian group.\\
One checks that
\begin{equation}
  \label{eq:7}
  [(U_1^B)^q,H(V)]=0=[U_2^B,H(V)] 
\end{equation}
Consider $\Gamma'$, the sublattice of $\Gamma$ defined by $\Gamma'=q\Z
e_1\oplus\Z e_2$. Its dual lattice $(\Gamma')^*$ is given by
\begin{equation*}
  (\Gamma')^*=\{\gamma^*\in\R^2;\ \forall\gamma'\in\Gamma',\
  \gamma^*\cdot \gamma'\in2\pi\Z\}.
\end{equation*}
For any $\gamma'=q\gamma'_1e_1+\gamma'_2e_2\in\Gamma'$, define the
phase $\Theta_q(\gamma')$ by
\begin{equation}
  \label{eq:10}
  \Theta_q(\gamma')=e^{iBe_1\wedge e_2q\gamma'_1\gamma'_2/2}
  =e^{i\pi p\gamma'_1\gamma'_2}\in\{-1,+1\}.
\end{equation}
This allows us to define a unitary representation of the sublattice
$\Gamma'$ by
\begin{equation}
  \label{eq:8}
  W_{q,\gamma'}^B=\Theta_q(\gamma')U_{\gamma'}^B.
\end{equation}
It is easy to check that
\begin{equation*}
  W_{q,\gamma}^BW_{q,\gamma'}^B=W_{q,\gamma+\gamma'}^B
  ,\quad\forall(\gamma,\gamma')\in(\Gamma')^2.
\end{equation*}
We define the transformation $T^B$ on smooth functions by
\begin{equation*}
  (T^Bf)(x,\theta)=\sum_{\gamma'\in\Gamma'}\; e^{i\theta\cdot (x+\gamma)}
  (W_{q,\gamma'}^Bf)(x),\quad \theta\in(\R^2)^*/(\Gamma')^*.
\end{equation*}
Again, a simple calculation shows that
\begin{equation*}
  (W_{q,\gamma'}^BT^Bf)(x,\theta)=(T^Bf)(x,\theta).
\end{equation*}
We define a function space ${\mathcal H}_{B,p}$ by
\begin{equation*}
  {\mathcal H}_{B,p}=\{v\in L_{loc}^2(\R^2)\;|\;
  W_{q,\gamma'}^Bv=v;\ \forall\gamma'\in\Gamma'\}.
\end{equation*}
It then follows that $T^B$ extends to a unitary map
\begin{equation*}
  T^B:\ L^2(\R^2)\rightarrow L^2((\R^2)^*/(\Gamma')^*),{\mathcal
    H}_{B,p}).
\end{equation*}
Given this structure, it is clear that the Hamiltonian $H$ admits a
direct integral decomposition (see e.g.~\cite{MR58:12429c}) over
$(\R^2)^*/(\Gamma')^*$, so that
\begin{equation*}
  T^BH(V)(T^B)^*=\int^{\oplus}_{(\R^2)^*/(\Gamma')^*}\,
  H(\theta,V)\,d\theta.
\end{equation*}
The operator $H(\theta,V)$ is self-adjoint on the Sobolev space
${\mathcal H}_{B,p}^2$, the local Sobolev space of order two of
functions in ${\mathcal H}_{B,p}$ and one computes
\begin{equation}
  \label{eq:11}
  H(\theta,V)=(i\nabla+A-\theta)^2+V.
\end{equation}
This operator has a compact resolvent. Consequently, the spectrum is
discrete and consists of eigenvalues of finite multiplicity, say,
$(E_j(V,\theta))_{j\in\{1,2,\ldots\}}$ labeled in increasing order and
repeated according to multiplicity. For $n\geq1$, the function
$(\theta,V)\in(\R^2)^*/(\Gamma')^*\times C_\Gamma\mapsto E_n(V,\theta)$ is
locally uniformly Lipschitz continuous; this follows from the
variational principle (see e.g.~\cite{MR58:12429c}) and the fact that
$(i\nabla-A)$ is $H$-bounded with relative bound 0. We endow the space
$(\R^2)^*/(\Gamma')^*\times C_\Gamma$ with the norm
$\|(\theta,V)\|=|\theta|+\|V\|$.
\vskip.1cm\noindent It is well known
(see~\cite{MR58:12429c,MR1703339}) that Theorem~\ref{thr:1} is a
corollary of
\begin{Th}
  \label{thr:2}
  There exists a $G_\delta$-dense subset of $C_\Gamma$ such that, for
  $V$ in this set, none of the functions $\theta\mapsto
  E_n(V,\theta)$, $n\geq1$, is constant.
\end{Th}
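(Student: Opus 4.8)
The plan is to derive Theorem~\ref{thr:2} from the Baire category theorem. For $n\geq1$ put
\[
\mathcal{O}_n=\{V\in C_\Gamma\ ;\ \theta\mapsto E_n(V,\theta)\text{ is not constant}\}.
\]
Each $\mathcal{O}_n$ is open: if $\theta\mapsto E_n(V_0,\theta)$ is not constant, choose $\theta_1,\theta_2$ with $E_n(V_0,\theta_1)\neq E_n(V_0,\theta_2)$; by the local uniform Lipschitz continuity of $(\theta,V)\mapsto E_n(V,\theta)$ recalled above, this inequality persists for $V$ close to $V_0$. It therefore suffices to show that each $\mathcal{O}_n$ is dense in $C_\Gamma$; then $\bigcap_{n\geq1}\mathcal{O}_n$ is the desired $G_\delta$-dense set.

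To prove density of a fixed $\mathcal{O}_n$, argue by contradiction: suppose there is an open ball $\mathcal{B}\subset C_\Gamma$ centred at some $V_0$ on which $\theta\mapsto E_n(V,\theta)$ is identically a constant $\lambda(V)$. A standard perturbation argument disposes of possible degeneracies of the band: a generic small perturbation $V_0+tW$ inside $\mathcal{B}$ splits a degenerate band into simple ones, and $E_n(V_0+tW,\cdot)$ --- still flat by hypothesis --- is one of them; thus, after relabelling, one may assume that there is a ball $\mathcal{B}'\subset\mathcal{B}$ centred at some $V_1$ and a nonempty open set $\Omega\subset(\R^2)^*/(\Gamma')^*$ such that on $\mathcal{B}'\times\Omega$ the $n$-th band is simple and flat, with a real-analytic normalised eigenfunction $\phi^V_\theta\in\mathcal{H}_{B,p}$.

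Two rigidity statements then follow. First, by Feynman--Hellmann, for real $W\in C_\Gamma$, $V\in\mathcal{B}'$ and $\theta\in\Omega$,
\[
\partial_t\lambda(V+tW)\big|_{t=0}=\langle\phi^V_\theta,W\phi^V_\theta\rangle=\int_{D'}|\phi^V_\theta(x)|^2\,W(x)\,dx,
\]
$D'$ a fundamental domain of $\Gamma'$; the left side being independent of $\theta$, the right side is too, and letting $W$ run over $C_\Gamma$ forces the $\Gamma$-periodisation of $|\phi^V_\theta|^2$ to be independent of $\theta$. Continuing to higher order in $W$ shows likewise that the reduced resolvent $R^V_\theta=(H(\theta,V)-\lambda(V))^{-1}\big(1-|\phi^V_\theta\rangle\langle\phi^V_\theta|\big)$ has $\theta$-independent compressed matrix elements $\langle\phi^V_\theta,W_1R^V_\theta W_2\cdots R^V_\theta W_k\phi^V_\theta\rangle$ with $W_i\in C_\Gamma$. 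Second, magnetic covariance: from \eqref{eq:5} and \eqref{eq:6}, $U^B_a$ sends the fibre over $\theta$ unitarily onto the fibre over a shifted parameter while conjugating $H+V$ into $H+V(\cdot-a)$, so $E_n(V(\cdot-a),\theta)=E_n(V,\theta')$ for a corresponding $\theta'$, $\phi^{V(\cdot-a)}_{\theta'}=U^B_a\phi^V_\theta$ up to a phase, and the set of potentials with a flat $n$-th band is translation invariant.

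The crux, and the step I expect to be the main obstacle, is to turn this rigidity into a contradiction. The classical Thomas-type argument --- complexify $\theta$ and let $|\mathrm{Im}\,\theta|\to\infty$, where the resolvent of the fibre operator exists so $\lambda$ cannot stay in its spectrum --- is unavailable, the magnetic term preventing the resolvent from decaying in the imaginary directions; this is precisely why genuine flat bands do occur for special potentials (for instance $V$ constant) and why the statement is only generic. Instead, the plan is to exploit the two rigidity statements together with the explicit dependence $\partial_\theta H(\theta,V)=-2(i\nabla+A-\theta)$ and the magnetic covariance, to show that a simple flat band of the above type cannot persist over an open set of potentials: the $\theta$-independence of the periodised density and of the reduced-resolvent data, combined with the translation covariance, should constrain the family $\{\phi^V_\theta\}$, and hence the fibre operators $H(\theta,V)$, rigidly enough to force $V$ to be invariant under a translation not in $\Gamma$ --- placing $V$ in a nowhere-dense subset of $C_\Gamma$ and contradicting that $\mathcal{B}'$ is open. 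Making this last step precise --- extracting the extra translation symmetry of $V$ from the reduced-resolvent identities --- is where essentially all the work lies, and where the magnetic structure and the genericity enter in an essential way.
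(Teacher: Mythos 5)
You have the right skeleton, and it is the paper's: set $\mathcal{O}_n=\{V:\ \theta\mapsto E_n(V,\theta)\text{ not constant}\}$, get openness from the local Lipschitz continuity of the Floquet eigenvalues, and prove density by contradiction after first perturbing to a situation where the band and an eigenbasis depend analytically on $(\theta,V)$ (the paper's Lemma~\ref{le:2}; note that the paper does not claim one can always reach multiplicity one, only locally \emph{constant} multiplicity, which it calls analytic degeneracy and obtains by an induction on the multiplicity using the characteristic polynomial of the compressed operator $\Pi(\theta,V)H(\theta,V)$ --- your ``generic splitting into simple bands'' is asserted rather than proved, but this is the lesser issue). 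Your first-order rigidity statement is also the paper's starting point: Feynman--Hellmann together with $\partial_t\nabla_\theta E=0$ gives that the $\Gamma$-periodisation of $\nabla_\theta|\varphi(x;\theta)|^2$ vanishes against all $U\in C_\Gamma$, whence (the eigenfunction being real analytic by analytic hypoellipticity) $|\varphi(x;\theta)|^2$ is independent of $\theta$. And you are right that the Thomas complexification is unavailable here.

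However, the step you yourself flag as ``where essentially all the work lies'' is genuinely missing, and the route you propose for it (forcing an extra translation symmetry of $V$) is not how the contradiction is obtained and is not obviously workable. What the paper does instead is the following. Since $|\varphi(\cdot;\theta)|$ is $\theta$-independent, on each connected component $D$ of the complement of the nodal set one writes $\varphi(x;\theta)=e^{ig_D(x;\theta)}\psi_D(x)$ with $\psi_D\geq0$ independent of $\theta$; making this usable requires controlling the nodal set (Lemma~\ref{le:4}: the set where $\varphi$ and $\nabla\varphi$ vanish simultaneously is discrete, proved by an inductive vanishing-of-all-derivatives argument, plus the Bierstone--Milman stratification of the real-analytic zero set, which lets one pick a strip crossing only finitely many analytic nodal curves, across each of which the phase jumps by $\pi$, Lemma~\ref{le:3}). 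Substituting the polar form into the eigenvalue equation and adding the complex conjugate equation gives $(A+\theta+\nabla_xg_D)^2\psi_D=(E-V_0)\psi_D+\Delta\psi_D$; as everything except $g_D$ is $\theta$-independent and $\psi_D>0$ on $D$, the function $\theta+\nabla_xg_D(x,\theta)$ must be $\theta$-independent, i.e. $g_D(x,\theta)=-\theta\cdot x+h_D(x)+c_D(\theta)$. Feeding this affine-in-$\theta$ phase into the magnetic quasi-periodicity $W^B_{q,\gamma'}\varphi=\varphi$ (equation~\eqref{eq:20}) produces an identity $\theta\cdot\gamma'=(\text{quantity independent of }\theta)$ for $\gamma'\in q\Z e_1$, which is absurd. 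So the contradiction comes from the incompatibility between the forced $\theta$-dependence of the phase and the magnetic boundary conditions --- no hidden symmetry of $V$ is extracted --- and none of this (nodal-set analysis, polar decomposition, phase matching) appears in your proposal. As written, you establish openness and correctly identify the obstruction, but you do not prove density of $\mathcal{O}_n$, hence not the theorem.
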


\noindent Pick $\theta_0\in(\R^2)^*/(\Gamma')^*$ and $V\in
C_\Gamma$. Let $n\geq1$.
\begin{Def}
  \label{def:1}
  $E_n(\theta_0,V_0)$ is an analytically degenerate eigenvalue of
  $H(\theta_0,V_0)$ if and only if there exists $\delta>0$ and an
  orthonormal system of $p$ functions, say $(\theta,V)\mapsto
  \varphi_j(\cdot,\theta,V)$, $j\in\{1,\cdots,p\}$, defined and real
  analytic on
  $U_{\theta_0,V_0}:=\{||(\theta,V)-(\theta_0,V_0)\|<\delta\}$ valued
  in ${\mathcal H}_{B,p}^2$ such that, for all $(\theta,V)\in
  U_{\theta_0,V_0}$,
  \begin{itemize}
  \item the functions $(\varphi_j(\cdot,\theta,V))_{1\leq j\leq p}$
    span the kernel of $H(\theta,V)-E_n(\theta,V)$,
  \item one has
    \begin{equation*}
      H(\theta,V)\varphi_j(\theta,V)=E_n(\theta,V)\varphi_j(\theta,V)\quad
      \text{for}\quad 1\leq j\leq p.
    \end{equation*}
  \end{itemize}
\end{Def}
\begin{Rem}
  \label{rem:2}
  As one can see from the proof of Lemma~\ref{le:2}, to say that
  $E_n(\theta,V)$ is analytically degenerate near $(E_0,V_0)$ is
  equivalent to say that the multiplicity of $E_n(\theta,V)$ is
  constant in some neighborhood of $(E_0,V_0)$.
\end{Rem}
\noindent Theorem~\ref{thr:2} is a consequence of the following two
lemmas
\begin{Le}
  \label{le:1}
  Pick $\theta_0\in(\R^2)^*/(\Gamma')^*$ and $V_0\in C_\Gamma$ such
  that $V_0$ is not a constant. Assume that $E_n(\theta_0,V_0)$ is an
  analytically degenerate eigenvalue of $H(\theta_0,V_0)$.  Then, for
  any $\varepsilon>0$, there exists $V\in\{\|V-V_0\|< \varepsilon\}$
  such that $\theta\mapsto E_n(\theta,V)$ is not constant.
\end{Le}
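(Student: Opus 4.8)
The plan is to argue by contradiction, using analytic perturbation theory to exploit the hypothesis that the degeneracy persists analytically. Suppose that for some $\varepsilon>0$, $\theta\mapsto E_n(\theta,V)$ is constant for \emph{every} $V$ with $\|V-V_0\|<\varepsilon$. Since $E_n(\theta_0,V_0)$ is analytically degenerate, there is a ball $U_{\theta_0,V_0}$ on which the eigenprojection $P(\theta,V)$ onto $\ker(H(\theta,V)-E_n(\theta,V))$ has constant rank $p$ and depends real-analytically on $(\theta,V)$, together with a real-analytic orthonormal frame $\varphi_1,\dots,\varphi_p$. First I would differentiate the eigenvalue equation $H(\theta,V)\varphi_j=E_n(\theta,V)\varphi_j$ with respect to the $V$-variable in a direction $W\in C_\Gamma$: since $H(\theta,V+tW)=H(\theta,V)+tW$, first-order perturbation theory (the Feynman--Hellmann formula) gives that the restriction of multiplication by $W$ to the range of $P(\theta,V)$ equals $(\partial_t E_n)(\theta,V)\cdot P(\theta,V)$, i.e.\ the $p\times p$ Hermitian matrix $\big(\langle W\varphi_i,\varphi_j\rangle\big)_{i,j}$ is a scalar multiple of the identity for every admissible $W$. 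Equivalently, for all $i\neq j$ one has $\int W\,\varphi_i\overline{\varphi_j}=0$, and $\int W(|\varphi_i|^2-|\varphi_j|^2)=0$.

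Next I would use the $\theta$-derivative together with the assumed $\theta$-independence of $E_n$. Differentiating in $\theta$ and using \eqref{eq:11}, the Feynman--Hellmann formula now reads: the $p\times p$ matrix of $\big\langle (\nabla_\theta H(\theta,V))\varphi_i,\varphi_j\big\rangle$ is $(\nabla_\theta E_n)\,P=0$; here $\nabla_\theta H(\theta,V)=-2(i\nabla+A-\theta)$ is (twice, up to sign) the magnetic velocity operator. Combining this with the persistence of the eigenvalue, standard reduction (diagonalising the analytic family of $p\times p$ matrices $P(\theta,V)\,W\,P(\theta,V)$ and $P\,(\text{velocity})\,P$ simultaneously, as $W$ ranges over a basis of trigonometric polynomials in $C_\Gamma$) shows that the family of finite-dimensional ranges $\mathrm{Ran}\,P(\theta,V)$ carries a flat structure on which \emph{all} multiplication operators by $\Gamma$-periodic functions act as scalars. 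But the only closed subspace of ${\mathcal H}_{B,p}$ on which every $W\in C_\Gamma$ acts as a scalar multiple of the identity is a space where the pointwise ratios $\varphi_i/\varphi_j$ are constant and $|\varphi_i|$ is a.e.\ constant on the cell; pushing this through I would conclude that $V_0$ itself must act as a scalar on $\mathrm{Ran}\,P$, and then — by a Weyl-type or unique-continuation argument for the magnetic Schr\"odinger operator $H(\theta_0,V_0)$ — that $V_0$ must be constant, contradicting the hypothesis.

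The main obstacle is the last reduction: turning the algebraic statement ``$P\,W\,P$ is scalar for all periodic $W$'' into the pointwise/rigidity statement about the eigenfunctions, carefully in the magnetic setting where the $\varphi_j$ satisfy the covariance $W_{q,\gamma'}^B\varphi_j=e^{i\theta\cdot\gamma'}\varphi_j$ rather than ordinary periodicity, so one cannot simply invoke classical Bloch theory. I expect to handle this by working on the fundamental cell, writing $\varphi_i\overline{\varphi_j}$ as genuinely $\Gamma'$-periodic functions, expanding them in Fourier series over $(\Gamma')^*$, and using that $\int W\,\varphi_i\overline{\varphi_j}=0$ for all $W\in C_\Gamma$ (a much smaller set of test functions, periodic w.r.t.\ the \emph{larger} lattice $\Gamma$) forces the relevant Fourier coefficients to vanish; the residual freedom coming from the index difference between $\Gamma$ and $\Gamma'$ is exactly what the $q$ sheets encode, and it is controlled because $p\le q$ by the magnetic flux quantization \eqref{eq:9}. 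A secondary technical point is to ensure the real-analytic frame can be chosen \emph{jointly} in $(\theta,V)$, which is where Remark~\ref{rem:2} (constant multiplicity) and the Kato--Rellich analytic perturbation theory for the self-adjoint holomorphic family $H(\theta,V)$ are invoked.
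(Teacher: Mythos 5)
Your first-order computations are correct as far as they go (the Feynman--Hellmann identity \eqref{eq:18} in the $V$-direction, and $\nabla_\theta E=0$ in the $\theta$-direction), but the argument has a genuine gap: the first-order conditions alone cannot produce a contradiction. In the generic case the multiplicity is one (Remark~\ref{rem:1}), so ``$P\,W\,P$ is scalar for every periodic $W$'' is the tautology that a $1\times1$ matrix is scalar, and ``$P(\nabla_\theta H)P=0$'' is just a restatement of $\nabla_\theta E=0$; your proposed rigidity step then has no content to work with. The information the paper actually extracts is the \emph{mixed second derivative}: differentiating \eqref{eq:18} in $\theta$ and using $\nabla_\theta\partial_t E=0$ for \emph{every} perturbation direction $U\in C_\Gamma$ yields the pointwise identity $\nabla_\theta\bigl(|\varphi(x;\theta,0)|^2\bigr)\equiv0$ (equation \eqref{eq:14}), i.e.\ the modulus of the eigenfunction does not move with $\theta$. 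Your plan never forms this cross-derivative, and your diagonal conclusion ``$\int W(|\varphi_i|^2-|\varphi_j|^2)=0$ for all $W\in C_\Gamma$ forces $|\varphi_i|$ a.e.\ constant'' is false: it only identifies the $\Gamma$-periodizations $\sum_{k}|\varphi_i(\cdot+ke_1)|^2$ of the two densities.

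The second, larger omission is the endgame. The contradiction in the paper is not that $V_0$ is forced to be constant (indeed the non-constancy of $V_0$ plays almost no role in the final step); it is that $\theta$-independence of $|\varphi|^2$ forces the phase of $\varphi$ on each nodal domain to have the form $g_D(x,\theta)=-\theta\cdot x+h_D(x)+c_D(\theta)$ (equation \eqref{eq:24}), and this is incompatible with the magnetic quasi-periodicity constraint \eqref{eq:20}, whose left-hand side then depends on $\theta$ while its right-hand side does not. Reaching that point requires real work that your plan does not anticipate: analytic hypoellipticity to make $\varphi$ real analytic in $x$, the structure theory of the nodal set $Z$ (decomposition into points and analytic arcs, Lemma~\ref{le:4} on the critical zeros), the strip construction, and the phase-jump Lemma~\ref{le:3} needed to patch the local polar decompositions across the nodal curves. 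A Fourier-series/index count between $\Gamma$ and $\Gamma'$, as you suggest, does not substitute for this pointwise phase analysis.
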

\noindent and
%
\begin{Le}
  \label{le:2}
  Pick $\theta_0\in(\R^2)^*/(\Gamma')^*$ and $V_0\in C_\Gamma$. Fix
  $n\geq1$. Then, for any $\varepsilon>0$, there exists
  $(\theta_\varepsilon,V_\varepsilon)\in\{\|(\theta,V)-(\theta_0,V_0)\|
  <\varepsilon\}$ and $\delta>0$ such that $E_n(\theta,V)$ is an
  analytically degenerate eigenvalue of $H(\theta,V)$ for
  $(\theta,V)\in\{\|(\theta,V)-(\theta_\varepsilon,V_\varepsilon)\|<\delta\}$.
\end{Le}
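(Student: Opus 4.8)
The plan is to use Kato's analytic perturbation theory, adapted to the fact that the parameter $V$ lives in a Banach space, together with the upper semicontinuity of eigenvalue multiplicity; neither the variational principle nor any ``genericity'' of $V$ is needed here. The first step is to bring the problem into a bounded setting. Since $H(\theta,V)=(i\nabla+A-\theta)^2+V$ is a polynomial of degree two in $\theta\in\C^2$ and affine (hence entire) in $V$ — the potential acting as a bounded multiplication operator with $\|V\|_{\mathcal B(L^2)}\le\|V\|$ — the resolvent $(\theta,V)\mapsto(H(\theta,V)-z)^{-1}$, wherever it exists, is analytic in $(\theta,V)$ in the Banach sense, i.e.\ locally given by an operator-norm convergent power series (a Neumann expansion around any base point), both as a map into $\mathcal B(L^2(\R^2))$ and into $\mathcal B(L^2(\R^2),\mathcal H_{B,p}^2)$. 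Because $H(\theta_0,V_0)$ has compact resolvent, its spectrum is discrete, so one may fix a circle $C\subset\C$ enclosing $E_n(\theta_0,V_0)$ and no other eigenvalue of $H(\theta_0,V_0)$; by the local uniform continuity of the $E_j$ there is $\rho>0$ such that $C$ avoids $\sigma(H(\theta,V))$ for $(\theta,V)\in B_0:=\{\|(\theta,V)-(\theta_0,V_0)\|<\rho\}$. Then $P(\theta,V)=\frac1{2\pi i}\oint_C(H(\theta,V)-z)^{-1}\,dz$ is analytic of constant finite rank $N$ on $B_0$, and for $(\theta,V)\in B_0$ the eigenvalues of $H(\theta,V)$ inside $C$ are exactly $\{E_j(\theta,V)\}_{j\in I}$ for one fixed block of indices $I\ni n$ with $|I|=N$; in particular $\mu_n(\theta,V):=\dim\ker(H(\theta,V)-E_n(\theta,V))\le N$ on $B_0$.

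The second step is to locate a point at which $\mu_n$ is locally constant. Since $E_1\le E_2\le\cdots$ depend continuously on $(\theta,V)$, for each integer $k\ge0$ the set $\{\mu_n\ge k+1\}$ is a finite union of sets of the form $\{E_{i+k}-E_i=0\}$ (the runs of $k+1$ equal eigenvalues that contain the index $n$), hence closed; so $\mu_n$ is upper semicontinuous. Being integer-valued with $1\le\mu_n\le N$ on $B_0$, it attains its minimum $m$ on $B_\varepsilon:=B_0\cap\{\|(\theta,V)-(\theta_0,V_0)\|<\varepsilon\}$, and the set $\{\mu_n\le m\}\cap B_\varepsilon=\{\mu_n=m\}\cap B_\varepsilon$ is open and non-empty. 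Pick $(\theta_\varepsilon,V_\varepsilon)$ in it and a ball $B_1$ around it, contained in $B_0$, on which $\mu_n\equiv m$.

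The third step is to build the analytic orthonormal frame of eigenfunctions on $B_1$. There $E_n(\theta,V)$ is an isolated point of $\sigma(H(\theta,V))$ of constant multiplicity $m$; near each point of $B_1$ a small circle around it encloses no other point of the spectrum, and the resulting local Riesz projections all coincide with the spectral projection $Q(\theta,V)$ onto $\ker(H(\theta,V)-E_n(\theta,V))$, which is therefore well defined and analytic on $B_1$, of constant rank $m$. Shrinking $B_1$ to a ball of some radius $\delta$ on which $\|Q(\theta,V)-Q_0\|<1$, where $Q_0:=Q(\theta_\varepsilon,V_\varepsilon)$, Kato's transformation function (writing $Q=Q(\theta,V)$) $U(\theta,V):=\bigl(1-(Q-Q_0)^2\bigr)^{-1/2}\bigl(QQ_0+(1-Q)(1-Q_0)\bigr)$ is an analytic family of unitaries on $\mathcal H_{B,p}^2$ carrying $\mathrm{ran}\,Q_0$ onto $\mathrm{ran}\,Q(\theta,V)$; applying it to a fixed orthonormal basis of the $m$-dimensional space $\mathrm{ran}\,Q_0$ produces an orthonormal, real analytic, $\mathcal H_{B,p}^2$-valued system $(\varphi_j(\cdot,\theta,V))_{1\le j\le m}$ spanning $\ker(H(\theta,V)-E_n(\theta,V))$ on $\{\|(\theta,V)-(\theta_\varepsilon,V_\varepsilon)\|<\delta\}$ and satisfying $H(\theta,V)\varphi_j(\theta,V)=E_n(\theta,V)\varphi_j(\theta,V)$ there. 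This is exactly what Definition~\ref{def:1} requires at $(\theta_\varepsilon,V_\varepsilon)$, and since the same system, restricted around any other point of that ball, witnesses the same property there, $E_n(\theta,V)$ is analytically degenerate for every $(\theta,V)$ with $\|(\theta,V)-(\theta_\varepsilon,V_\varepsilon)\|<\delta$. (Conversely, an analytically degenerate eigenvalue has, by Definition~\ref{def:1}, constant multiplicity $p$ near the point, which is the equivalence recorded in Remark~\ref{rem:2}.)

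I do not expect a deep obstacle in this argument; it is soft in nature. The two points that require care, rather than real difficulty, are that the parameter space $(\R^2)^*/(\Gamma')^*\times C_\Gamma$ is an infinite-dimensional Banach space — so ``analytic'' must be read throughout as ``locally a convergent power series'', and, more importantly, there is no compactness available, which forces every projection, transformation function, and gap estimate above to be made global only after passing to a sufficiently small ball — and the elementary but slightly fussy bookkeeping of eigenvalue blocks underlying the upper semicontinuity of $\mu_n$ and the identification of the index set $I$.
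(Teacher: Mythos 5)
Your proposal is correct, and it hits the same two milestones as the paper's proof --- first locate a point within $\varepsilon$ of $(\theta_0,V_0)$ at which the multiplicity of $E_n$ is \emph{locally constant}, then manufacture an analytic orthonormal eigenframe out of Riesz projections --- but it reaches each milestone by a different route. For the first, the paper runs a finite descent: it reduces to the $N_0\times N_0$ matrix $\tilde M(\theta,V)$ of the full Riesz block and either all $N_0$ roots of its characteristic polynomial coincide on a whole neighborhood, or one restarts at a nearby point where the multiplicity has strictly dropped, the process terminating after at most $N_0$ steps. You instead note that $\mu_n$ is integer-valued and upper semicontinuous (via the closedness of $\{E_{i+k}=E_i\}$), so a point of locally minimal, hence locally constant, multiplicity exists in every ball; this is a cleaner packaging of the same finiteness phenomenon. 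For the second, the paper keeps the full block projector $\Pi(\theta,V)$ (whose range, in its case~(1), is exactly $\ker(H-E_n)$ because the whole block has collapsed onto one eigenvalue), orthonormalizes the projected basis through the Gram matrix, and proves $\tilde M=E_n\,\mathrm{Id}$ by Hermiticity plus an analytic continuation in the complexified potential; you shrink the contour so that it encloses only the $m$-fold eigenvalue $E_n$ and apply Kato's transformation function. Your version avoids the characteristic-polynomial bookkeeping, whereas the paper's matrix reduction explicitly handles complex $V$ (cf.\ its Remark~\ref{rem:3}), which is convenient later since the proof of Lemma~\ref{le:1} perturbs by $tU$ with $t$ complex; your resolvent-based construction extends to complex $V$ just as well, but you would need to say so. Two small repairs: $U(\theta,V)$ is unitary on $L^2(\R^2)$, not on $\mathcal{H}_{B,p}^2$; and to obtain $\varphi_j$ real analytic as an $\mathcal{H}_{B,p}^2$-valued map, as Definition~\ref{def:1} demands, write $\varphi_j=Q(\theta,V)\,U(\theta,V)\psi_j^0$ using the intertwining $UQ_0=QU$ together with the analyticity of $Q$ into $\mathcal{B}(L^2,\mathcal{H}_{B,p}^2)$, since $U$ itself does not map into $\mathcal{H}_{B,p}^2$. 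Neither point affects the validity of the argument.
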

\begin{Rem}
  \label{rem:1}
  In general, in Lemma~\ref{le:2}, the multiplicity of the eigenvalue
  is one.
\end{Rem}
\noindent How to complete the proof of Theorem~\ref{thr:2} using
Lemmas~\ref{le:1} and~\ref{le:2} is straightforward. For any $n\geq
1$, the set of $V$ in $C_\Gamma$ such that $\theta\mapsto
E_n(\theta,V)$ is not constant is open (as the Floquet eigenvalues are
locally uniformly Lipschitz continuous in $(\theta;V)$). In view of
Lemma~\ref{le:1} and~\ref{le:2}, for any $n\geq1$, the set of $V$ in
$C_\Gamma$ such that $\theta\mapsto E_n(\theta,V)$ is not constant is
dense.  Hence, the set of $V$ where none of $(\theta\mapsto
E_n(\theta,V))_{n\geq1}$ is constant is a countable intersection of
dense open sets i.e. a $G_\delta$-dense set. This completes the proof
of Theorem~\ref{thr:2}.
\begin{proof}[The proof of Lemma~\ref{le:2}] Fix $\varepsilon>0$. Pick
  $V_0\in C_\Gamma$ and $\theta_0\in(\R^2)^*/(\Gamma')^*$. Then,
  $E_n(V_0,\theta_0)$ is an isolated eigenvalue of $H(\theta_0,V_0)$
  of multiplicity say $N_0=N(\theta_0,V_0)$. Let $\delta>0$ be such
  that $E_n(V_0,\theta_0)$ be the only eigenvalue of $H(\theta_0,V_0)$
  in $D(E_n(V_0,\theta_0),2\delta)$, the disk of center
  $E_n(V_0,\theta_0)$ and radius $2\delta$. The projector onto the
  eigenspace associated to $E_n(V_0,\theta_0)$ and $H(\theta_0,V_0)$
  is given by Riesz's formula
  \begin{equation*}
    \Pi(\theta_0,V_0)=\frac1{2i\pi}\int_{|z-E_n(V_0,\theta_0)|=\delta}
    (z-H(\theta_0,V_0))^{-1}dz.
  \end{equation*}
  It is well known (see e.g.~\cite{MR58:12429c,Ka:80}) that, there
  exists $\varepsilon_0\in(0,\varepsilon)$ such that, for
  $\|(\theta,V)-(\theta_0,V_0)\|<\varepsilon_0$, the projector onto
  the eigenspace associated to the spectrum of $H(\theta,V)$ in
  $D(E_n(V_0,\theta_0),\delta)$ is given by
  \begin{equation}
    \label{eq:19}
    \Pi(\theta,V)=\frac1{2i\pi}\int_{|z-E_n(V_0,\theta_0)|=\delta}
    (z-H(\theta,V))^{-1}dz.
  \end{equation}
  In particular, the rank of this projector is constant and equal to
  $N_0$, the multiplicity of $E_n(V_0,\theta_0)$ as an eigenvalue of
  $H(\theta_0,V_0)$.\\
  Consider the operator $M(\theta,V)=\Pi(\theta,V)H(\theta,V)$. Its
  eigenvalues are the eigenvalues of $H(\theta,V)$ in
  $D(E_n(\theta_0,V_0),\delta)$ and it has finite rank $N_0$. Let
  $(\psi_j)_{1\leq j\leq N_0}$ be an orthonormal basis of eigenvectors
  of $H(\theta_0,V_0)$ associated to the eigenvalue
  $E_n(\theta_0,V_0)$. For $j\in\{1,\cdots,N_0\}$, set
  $\psi_j(\theta,V)=\Pi(\theta,V)\psi_j$ and let $G(\theta,V)$ be the
  Gram matrix of these vectors. Then,
  \begin{equation*}
    G(\theta,V)-\text{Id}_{N_0}=O(\|(\theta,V)-(\theta_0,V_0)\|)
  \end{equation*}
  and the vectors
  \begin{equation*}
    \begin{pmatrix}
      \varphi_1(\theta,V)&\cdots&\varphi_{N_0}(\theta,V)
    \end{pmatrix}
    =\begin{pmatrix}\psi_1(\theta,V)&\cdots&\psi_{N_0}(\theta,V)
    \end{pmatrix}      \sqrt{G^{-1}(\theta,V)}
  \end{equation*}
  form an orthonormal basis of $\Pi(\theta,V)\mathcal{H}_{B,p}$.\\
  $E$ is an eigenvalue of $H(\theta,V)$ in
  $D(E_n(V_0,\theta_0),\delta)$ if and only if
  \begin{equation*}
    P(E;\theta,V)=\text{Det}\,(\tilde M(\theta,V)-E)=0 
  \end{equation*}
  where Det denotes the determinant and $\tilde M(\theta,V)$, the
  matrix of $M(\theta,V)$ in the basis
  $(\varphi_1(\theta,V),\cdots,\varphi_{N_0}(\theta,V))$.\\
  Then, either of two things occur:
  \begin{enumerate}
  \item there exists $\varepsilon>0$ and a function $(\theta,V)\mapsto
    E(\theta,V)$ such that, for
    $\|(\theta,V)-(\theta_0,V_0)\|<\varepsilon$, one has
    \begin{equation*}
      P(E(\theta,V),\theta,V)=\partial_EP(E(\theta,V),\theta,V)=\cdots
      =\partial^{N_0-1}_EP(E(\theta,V),\theta,V)=0
    \end{equation*}
    in which case, one has
    \begin{equation*}
      P(E,\theta,V)=(E-E(\theta,V))^{N_0}.
    \end{equation*}
    So $E_n(\theta,V)$ is the only eigenvalue of the matrix $\tilde
    M(\theta,V)$. For $\theta$ and $V$ real, $\tilde M(\theta,V)$ is
    Hermitian hence it is equal to $E_n(\theta,V)\,\text{Id}_{N_0}$.\\
    Pick now $V$ complex such that
    $\|(\theta,V)-(\theta_0,V_0)\|<\varepsilon/4$. We can write
    $V=V_r+iV_i$ with both $V_r\in\mathcal{C}_\Gamma$ and
    $V_i\in\mathcal{C}_\Gamma$. For $z\in D(0,2)$,
    $\|(\theta,V_r+zV_i)-(\theta_0,V_0)\|<\varepsilon$. Hence,
    $z\mapsto\tilde M(\theta,V_r+zV_i)$ and $z\mapsto
    E_n(\theta,V_r+zV_i)$ are analytic. Above, we have proved that,
    for $z$ real, one has
    \begin{equation*}
      \tilde M(\theta,V_r+zV_i)=E_n(\theta,V_r+zV_i)\text{Id}_{N_0}.
    \end{equation*}
    By analytic continuation, this stays true for $z$ in $D(0,2)$ in
    particular for $z=i$ i.e. $\tilde M(\theta,V)$ is Hermitian hence
    it is equal to $E_n(\theta,V)$ times the identity.\\
    So, $(\theta,V)\mapsto E(\theta,V)$ in an analytically degenerate
    eigenvalue of $H(\theta,V)$ (of order $N_0$).
    \begin{Rem}
      \label{rem:3}
      Actually, using the normal Jordan form for matrices instead of
      the Hermitian nature of the matrix, we only need to know that
      $\tilde M(\theta_0,V_0)$ is reducible to conclude that if the
      multiplicity of $E_n(\theta,V)$ is constant then the eigenvalue
      is analytically degenerate.
    \end{Rem}
  \item or, for any $\varepsilon>0$, there exists $N_1<N_0$ and
    $(\theta_1,V_1)$ such that
    $|(\theta_1,V_1)-(\theta_0,V_0)|<\varepsilon$ and
    \begin{equation*}
      \begin{split}
        P(E(\theta_1,V_1),\theta_1,V_1)&=\partial_EP(E(\theta_1,V_1),\theta_1,V_1)
        \\&=\cdots
        =\partial^{N_1-1}_EP(E(\theta_1,V_1),\theta_1,V_1)=0
      \end{split}
    \end{equation*}
    and
    \begin{equation*}
      \partial^{N_1}_EP(E(\theta_1,V_1),\theta_1,V_1)\not=0.;
    \end{equation*}
    in this case, $E(\theta_1,V_1)$ is an eigenvalue of multiplicity
    $N_1\leq N_0-1$ of $H(\theta_1,V_1)$.
  \end{enumerate}
  In the first case, Lemma~\ref{le:2} is proven. In the second case,
  we can then start the process over again near
  $(\theta_1,V_1)$. After at most $N_0$ such reductions we will have
  constructed the pair $(\theta_\varepsilon,V_\varepsilon)$ announced
  in Lemma~\ref{le:2}.  This completes the proof of Lemma~\ref{le:2}.
\end{proof}
We now turn to the proof of Lemma~\ref{le:1}.
\begin{proof}[Proof of Lemma~\ref{le:1}] Pick
  $\theta_0\in(\R^2)^*/(\Gamma')^*$ and $V_0\in C_\Gamma$. Assume that
  $E_n(\theta_0,V_0)$ is an analytically degenerate eigenvalue of
  $H(\theta_0,V_0)$. Let us write $E(\theta,V):=E_n(\theta,V)$. Assume
  that the conclusions of Lemma~\ref{le:1} is false. Then, there
  exists $\varepsilon>0$ such that for any $V\in\mathcal{C}_\Gamma$
  such that $\|V-V_0\|\leq\varepsilon$, the function $\theta\mapsto
  E(\theta,V)$ is constant. In particular, we can slightly change
  $V_0$ to assume that it is real analytic and the same conclusion
  still holds.  \\
  Pick $U\in \mathcal{C}_\Gamma$ such that $\|U\|=1$ and set
  $V_t=V_0+tU$, $t$ complex small.  As $E(\theta_0,V_0)$ is an
  analytically degenerate eigenvalue of $H(\theta_0,V_0)$, there
  exists $\varepsilon>0$ and $\varphi(\theta,t)$ real analytic in
  $(\theta,t)$ such that, for $|t|\leq\varepsilon$ and
  $|\theta-\theta_0|\leq\varepsilon$, one has
  \begin{equation}
    \label{eq:23}
    (H(\theta,t)-E(\theta,t))\varphi(\theta,t)=0,\quad \|\varphi(\theta,t)\|=1.
  \end{equation}
  Moreover $(\theta,t)\mapsto E(\theta,t)$ is real analytic.\\
  Differentiating the eigenvalue equation~\eqref{eq:23} for $\varphi$
  in $t$ yields
  \begin{equation}
    \label{eq:15}
    (H(\theta,t)-E(\theta,t))\partial_t\varphi(\theta,t)
    =[\partial_t E(\theta,t)-U]\varphi(\theta,t).
  \end{equation}  
  We note that, for $\gamma'\in\Gamma'$,
  \begin{equation*}
    W^B_{\gamma'}(\partial_t\varphi(\theta,t))=
    \partial_tW^B_{\gamma'}(\varphi(\theta,t))=
    \partial_t\varphi(\theta,t)
  \end{equation*}
  so $\partial_t\varphi(\theta,t)\in\mathcal{H}_{B,p}$. \\
  Using~\eqref{eq:23} and the self-adjointness of $H(\theta,t)$ on
  $\mathcal{H}_{B,p}$, one obtains
  \begin{equation}
    \label{eq:18}
    \partial_tE(\theta,t)=\langle U
    \varphi(\theta,t),\varphi(\theta,t)\rangle.
  \end{equation}
  We now assume that $E(\theta,t)$ does not depend on $\theta$ in some
  neighborhood of $\theta_0$ and for $t$ small i.e.
  \begin{equation*}
    \nabla_\theta E(\theta,t)=0.    
  \end{equation*}
  So differentiating~\eqref{eq:18} with respect to $\theta$, we obtain
  that
  \begin{equation*}
    \begin{split}
      0&=\partial_{t}\nabla_\theta E(\theta,t)=
      \nabla_\theta\partial_{t}E(\theta,t)\\
      &= \langle U\varphi(\theta,t),
      \nabla_\theta\varphi(\theta,t)\rangle+ \langle
      U\nabla_\theta\varphi(\theta,t),
      \varphi(\theta,t)\rangle\\
      &= 2\text{Re}\left[\langle U\varphi_k(\theta,t),
        \nabla_\theta\varphi_k(\theta,t)\rangle\right]
    \end{split}
  \end{equation*}
  Here, the real part is meant coordinate wise.\\
  At $t=0$, we then get that
  \begin{equation}
    \label{eq:13}
    \begin{split}
      0&=\text{Re}\left[\langle U\varphi(\theta,0),
        \nabla_\theta\varphi_k(\theta,0)\rangle\right]\\&=
      \int_{\R^2/\Gamma}U(x)\,\text{Re}\left(\nabla_\theta\varphi_k(x;\theta,0)
        \overline{\varphi_k(x;\theta,0)}\right)dx.
    \end{split}
  \end{equation}
  So, if for all $U\in\mathcal{C}_\Gamma$ such $\|U\|=1$ and for $t$
  small, we know that $\theta\mapsto E(\theta,t)$ is constant in some
  neighborhood of $\theta_0$, we obtain that~\eqref{eq:13} holds for
  all $U\in\mathcal{C}_\Gamma$ such that $\|U\|=1$. So, for $\theta$
  near $\theta_0$, one has
  \begin{equation}
    \label{eq:14}
    \forall x\in\R^2,\quad 2\text{Re}(\nabla_{\theta}\varphi(x;\theta,0)
    \overline{\varphi(x;\theta,0)})=
    \nabla_{\theta}\left(|\varphi(x;\theta,0)|^2\right)\equiv0.
  \end{equation}
  The operator $(i\nabla-A-\theta)^2+V_0$ being elliptic with real
  analytic coefficients, it is analytically hypoelliptic (see,
  e.g.~\cite{MR699623}); hence, $x\mapsto\varphi(x;\theta,0)$ is real
  analytic on $\R^2$. For $|\theta-\theta_0|\leq\varepsilon$, let
  $O_\theta\subset\R^2$ be the open set where the function
  $x\mapsto\varphi(x;\theta,0)$ does not vanish. By~\eqref{eq:14},
  this set is independent of $\theta$; we denote it by $O$. As
  $\varphi(\theta,0)\in\mathcal{H}_{B,p}$, $O$ is invariant by the
  translations by a vector in $\Gamma'$. Define $Z$ by
  $Z:=\R^2\setminus O$. $Z$ is also $\Gamma'$-periodic. Let $C$ be the
  fundamental cell of the lattice $\Gamma'$. As $Z$ is the set of
  zeros of the real analytic function $x\mapsto\varphi(x;\theta_0,0)$
  and as $Z\cap C$ is compact, we know that $Z\cap C$ has the
  following finite decomposition (see e.g.~\cite{MR89k:32011})
  \begin{equation}
    \label{eq:12}
    Z\cap C=\bigcup_{p=1}^{p_0}\mathcal{A}_p
  \end{equation}
  where the union is disjoint and, for $1\leq p\leq p_0$, one has
  \begin{enumerate}
  \item the set $\mathcal{A}_p$ either is reduced to a single point or
    is a connected real-analytic curve (i.e. a connected real analytic
    manifold of dimension 1);
  \item if $p\not=p'$ and $\mathcal{A}_p\cap
    \overline{\mathcal{A}_{p'}}\not=\emptyset$, then
    \begin{itemize}
    \item $\mathcal{A}_p\subset \overline{\mathcal{A}_{p'}}$,
    \item $\mathcal{A}_p$ is reduced to a single point,
    \item $\mathcal{A}_{p'}$ is a real-analytic curve;
    \end{itemize}
  \item assume $\mathcal{A}_p=\{x_0\}$ . Then, either $x_0$ is
    isolated in $Z\cap C$ or, for some $\varepsilon_0>0$ sufficiently
    small, one has
    \begin{equation*}
      Z\cap C\cap
      \Dot{\overline{D}}(x_0,\varepsilon_0)= \bigcup_{p'\in
        E}\mathcal{A}_{p'}\cap\Dot{\overline{D}}(x_0,\varepsilon_0)
    \end{equation*}
    where $E$ is a non empty, finite set of indices such that, for
    $p'\in E$, the set $\mathcal{A}_{p'}$ is a real analytic curve.\\
    Here, $\Dot{\overline{D}}(x_0,\varepsilon_0)=
    \{0<|x-x_0|\leq\varepsilon_0\}$.
  \end{enumerate}
  Let $Z_0=\cup_{\#\mathcal{A}_p=1}\mathcal{A}_p$ be the set of the
  points composing the point components in the above decomposition.
  \begin{Rem}
    \label{rem:4}
    As our Hamiltonian has no real symmetry i.e. the partial
    differential operator does not have real coefficients and as we
    are working in two space dimensions, it is reasonable to expect
    that the nodal set of an eigenfunction, if it is no empty, is
    actually made of points.
  \end{Rem}
  We will use the following
  \begin{Le}
    \label{le:4}
    Let $Z_\nabla$ be the set of points $x_0$ in $C$ such that
    $\varphi(x_0;\theta,0)=0$ and
    $\nabla\varphi(x_0;\theta,0)=0$. Then, $Z_\nabla$ consists of
    isolated points.
  \end{Le}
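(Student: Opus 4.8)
The plan is to prove the slightly stronger statement that \emph{every} point of $Z_\nabla$ is isolated in $Z_\nabla$; the lemma then follows at once. Throughout write $\varphi:=\varphi(\cdot;\theta,0)$ (with $\theta$ fixed, $|\theta-\theta_0|\le\varepsilon$) and $Z:=\{\varphi=0\}$. Recall that $\varphi$ is real analytic on $\R^2$, normalized (in particular $\varphi\not\equiv0$), and is an eigenfunction of the operator $H(\theta,V_0)$ of~\eqref{eq:11}, whose principal part is the constant-coefficient operator $-\Delta$; I will use the eigenvalue equation in the form $\Delta\varphi=b\cdot\nabla\varphi+c\,\varphi$ with $b,c$ real analytic (and complex valued). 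Set $z:=y_1+iy_2$.

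First I would fix $x_0\in Z_\nabla$ and expand $\varphi$ at $x_0$. Since $\varphi$ is analytic and not identically zero it vanishes at $x_0$ to a finite order $k$, and $k\ge2$ because $\varphi(x_0)=0$ and $\nabla\varphi(x_0)=0$; write $\varphi(x_0+y)=P_k(y)+O(|y|^{k+1})$, where $P_k\not\equiv0$ is homogeneous of degree $k$ (the order-$k$ part of the Taylor series at $x_0$). The key point is that $P_k$ must be \emph{harmonic}: inserting the expansion into $\Delta\varphi=b\cdot\nabla\varphi+c\varphi$, the right-hand side is $O(|y|^{k-1})$ (since $\nabla\varphi=O(|y|^{k-1})$, $\varphi=O(|y|^{k})$, and $b,c$ are bounded near $x_0$), while the left-hand side equals $\Delta P_k(y)+O(|y|^{k-1})$; hence $\Delta P_k(y)=O(|y|^{k-1})$, and a homogeneous polynomial of degree $k-2$ which is $O(|y|^{k-1})$ near $0$ vanishes identically. (This is the infinitesimal form of a classical theorem of Bers on the local behaviour of solutions of elliptic equations.) In two variables a nonzero harmonic homogeneous polynomial of degree $k$ is exactly $P_k(y)=\alpha z^k+\beta\bar z^k$ with $(\alpha,\beta)\ne(0,0)$, so its real zero set is either $\{0\}$ --- when $\alpha\beta=0$, or $\alpha\beta\ne0$ and $|\alpha|\ne|\beta|$ --- or a union of $k$ distinct lines through the origin --- when $\alpha\beta\ne0$ and $|\alpha|=|\beta|$ --- and in the latter case each of the $2k$ corresponding zero directions is a simple zero of $\vartheta\mapsto P_k(e^{i\vartheta})$.

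I would then treat the two cases separately. In the first case $m_0:=\min_{|\omega|=1}|P_k(\omega)|>0$, so $|\varphi(x_0+y)|\ge m_0|y|^k-O(|y|^{k+1})>0$ for $0<|y|$ small, and $x_0$ is already isolated in $Z$. In the second case, write $P_k=c\,q_k$ with $c\in\C\setminus\{0\}$ and $q_k(y)=\operatorname{Re}(e^{i\mu}z^k)=r^k\cos(k\vartheta+\mu)$ a \emph{real} harmonic homogeneous polynomial; its circle restriction has $2k$ simple zeros $\omega_j$, each with $|\nabla q_k(\omega_j)|=k$. Consider the real-valued analytic function $g(y):=\operatorname{Re}\!\big(\bar c\,\varphi(x_0+y)\big)=|c|^2q_k(y)+E(y)$, where $E(y)=O(|y|^{k+1})$ and (since $E$ is the tail of a convergent power series) $\nabla E(y)=O(|y|^{k})$. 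Then $Z-x_0\subseteq\{g=0\}$ and $\nabla g=\operatorname{Re}(\bar c\,\nabla\varphi)$. If $y=\rho\omega\ne0$ is small with $g(y)=0$, then $|c|^2\rho^k q_k(\omega)=-E(\rho\omega)$ forces $|q_k(\omega)|\le C\rho$, hence (the $\omega_j$ being simple zeros) $\omega=\omega_j+O(\rho)$ for some $j$; by homogeneity $\nabla g(y)=|c|^2\rho^{k-1}\nabla q_k(\omega)+\nabla E(y)$ with $|\nabla q_k(\omega)|\ge\tfrac{k}{2}$ for $\rho$ small, whence $|\nabla g(y)|\ge\tfrac{k}{2}|c|^2\rho^{k-1}-O(\rho^{k})>0$. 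So $\nabla g(y)\ne0$, and therefore $\nabla\varphi(x_0+y)\ne0$: no point of $Z$ other than $x_0$ and close to $x_0$ lies in $Z_\nabla$. In both cases $x_0$ is isolated in $Z_\nabla$, and since $x_0$ was arbitrary the proof is complete.

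I expect the decisive step to be recognizing the harmonicity of the leading homogeneous polynomial $P_k$ (which comes from the equation, not merely from analyticity): it is exactly what makes the two-dimensional picture rigid, ruling out the single configuration --- a zero line of higher multiplicity --- that would permit a whole arc of $Z_\nabla$ and hence contradict the lemma. Given that, the case split and the gradient estimate in the degenerate case are elementary, and everything else (finiteness of the order of vanishing from analyticity, the Bers-type identity, and the description of harmonic homogeneous polynomials in two variables) is routine.
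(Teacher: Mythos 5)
Your proof is correct, but it follows a genuinely different route from the paper's. The paper argues by contradiction globally: it invokes the structure theorem for real analytic sets to say that if $Z_\nabla$ is not discrete it must contain a real-analytic curve $c$, and then runs an induction on the order of differentiation, combining the coupled real system satisfied by $\operatorname{Re}\varphi$ and $\operatorname{Im}\varphi$ with repeated differentiation of the vanishing conditions along $c$ (treating separately the cases $c'\neq0$ and $c'\equiv0$), to conclude that every derivative of $\varphi$ vanishes on $c$ and hence $\varphi\equiv0$, contradicting $\|\varphi\|=1$. You instead work purely locally at an arbitrary point of $Z_\nabla$: you extract the leading homogeneous Taylor polynomial $P_k$ ($k\ge2$), use the equation to show $P_k$ is harmonic (the Bers-type step), classify two-dimensional harmonic homogeneous polynomials as $\alpha z^k+\beta\bar z^k$, and then either conclude the zero itself is isolated in $Z$ or show via the gradient estimate that nearby zeros are nondegenerate. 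Your argument yields more: a description of the nodal set near a degenerate zero (no other degenerate zeros nearby), and it sidesteps both the appeal to the stratification of $Z_\nabla$ and the somewhat delicate derivative induction of the paper; the price is the harmonicity/classification input, which is standard. The only points worth making explicit in a final write-up are (i) that $b,c$ in $\Delta\varphi=b\cdot\nabla\varphi+c\varphi$ are indeed bounded near $x_0$ (they are: $A$ is polynomial and $V_0$ is continuous, and in the context of Lemma~\ref{le:1} has even been taken real analytic), and (ii) that the implication ``$\nabla g(y)\ne0\Rightarrow\nabla\varphi(x_0+y)\ne0$'' is the contrapositive of $\nabla g=\operatorname{Re}(\bar c\,\nabla\varphi)$ --- both of which you already indicate. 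For the later use of the lemma one also wants $Z_\nabla\cap\overline{C}$ finite; this follows from your discreteness statement since $Z_\nabla$ is closed and $\overline{C}$ compact.
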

  \noindent We postpone the proof of Lemma~\ref{le:4} to complete that
  of Lemma~\ref{le:1}.\\
  Consider a horizontal straight line $L_x=x+\R\times\{0\}$ that does
  not intersect $Z_0\cup Z_\nabla$. As the other components of $Z$ are
  real analytic curves, possibly shifting this line, we can assume
  that it intersects these curves transversally in finitely many
  points. For $\delta>0$, define the strip $S^\delta_x$ by
  \begin{equation*}
    S^\delta_x=x+\R\times(-\delta,\delta).
  \end{equation*}
  Then, there exists $\delta>0$ such that
  \begin{itemize}
  \item $\overline{S^\delta_x}\cap (Z_0\cup Z_\nabla)=\emptyset$,
  \item $S^\delta_x$ intersects $Z$ in $C$ at, at most, finitely many
    vertical curves, and these curves partition the strip in a finite
    number of open domains (see Fig.~\ref{fig:1}).  Here, vertical
    means that the curves can be parametrized by the coordinate
    $x_1$.
  \end{itemize}
  \begin{figure}[h]
    \centering
    \includegraphics[height=5cm]{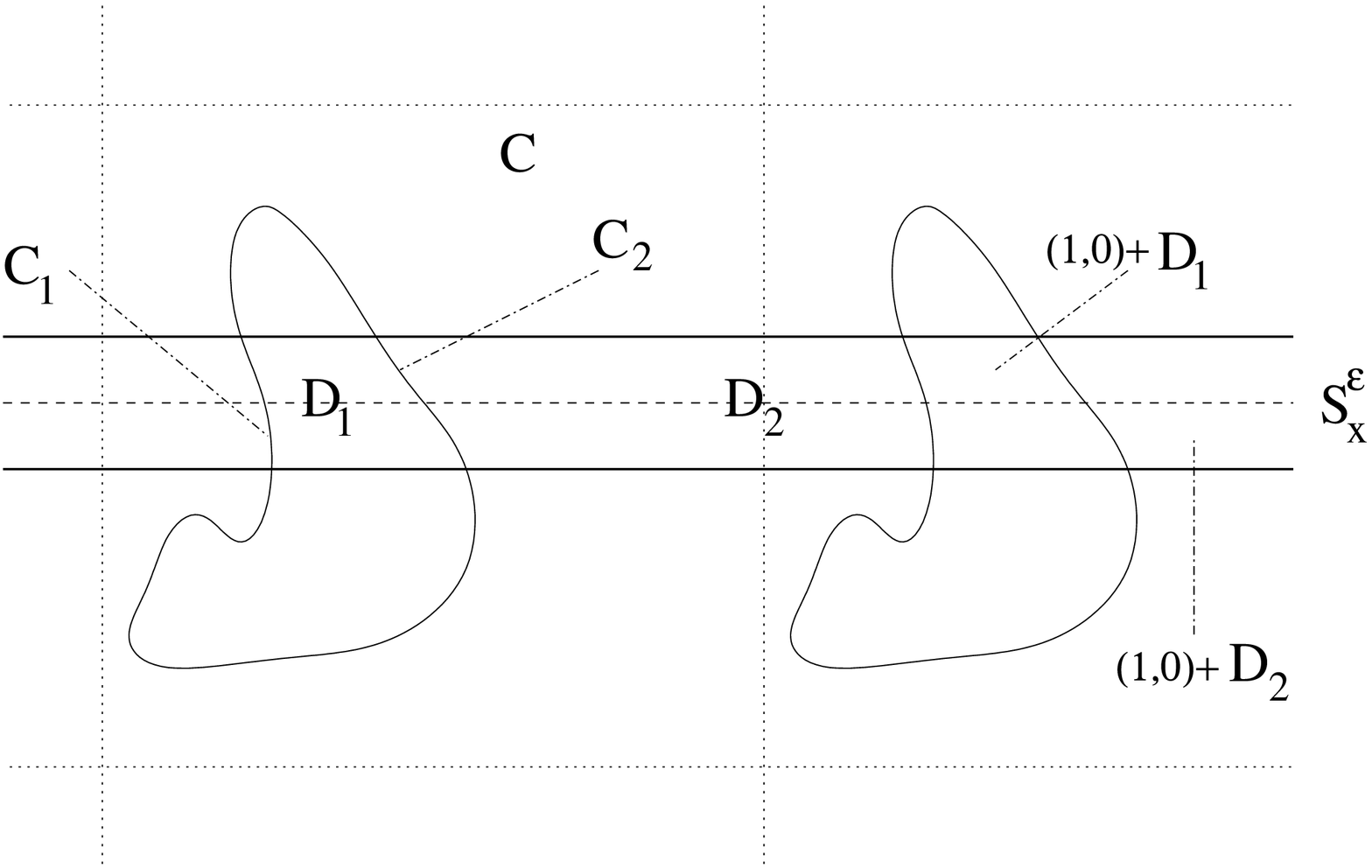}
    \caption{The strip}
    \label{fig:1}
  \end{figure}
  Recall that $Z$ is $\Gamma'$ periodic. Hence, we get that
  \begin{equation*}
    S^\delta_x\setminus Z=\bigcup_{\gamma'\in q\Z e1}
    \bigcup_{k=1}^s\gamma'+D_k\quad\text{ and }\quad
    Z\cap S_x=\bigcup_{\gamma'\in q\Z e1}\bigcup_{k=1}^s\gamma'+C_k
  \end{equation*}
  where, to fix ideas we assume that $C_k$ is the left boundary of
  $D_k$.\\
  We prove
  \begin{Le}
    \label{le:3}
    Let $D$ be one of the domains $\gamma'+D_k$ for some $1\leq k \leq
    s$ and some $\gamma'\in q\Z e_1$. \\
    For $\theta$ such that $|\theta-\theta_0|<\varepsilon$, there
    exists two real continuous $x\in\overline{D}\mapsto
    g_D(x;\theta)\in\R$ and $x\in\overline{D}\mapsto\psi_D(x)\in\R^+$,
    such that
    \begin{equation}
      \label{eq:17}
      \forall x\in D,\quad \varphi(x;\theta,0)=e^{i g_D(x;\theta)}\psi_D(x).
    \end{equation}
    and such that
    \begin{itemize}
    \item for any $x_0\in D$, $(x,\theta')\mapsto g_D(x;\theta')$ is
      real analytic in a neighborhood of $(x_0,\theta)$,
    \item let $D'$ be another domain in the collection
      $(\gamma'+D_k)_{\gamma',k}$; if
      $\overline{D}\cap\overline{D'}\not=\emptyset$ and $D'$ is to the
      left of $D$, then, for $x\in\overline{D}\cap\overline{D'}$, one
      has
      \begin{equation}
        \label{eq:16}
        g_D(x;\theta)=g_{D'}(x;\theta)+\pi .
      \end{equation}
    \end{itemize}
  \end{Le}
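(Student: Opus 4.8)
The plan is to construct $g_D$ as a continuous lift, through the covering $t\mapsto e^{it}$ of $S^1$, of the ``phase'' $\varphi/|\varphi|$; the key point is that by~\eqref{eq:14} the modulus $|\varphi(x;\theta,0)|$ does not depend on $\theta$ for $|\theta-\theta_0|<\varepsilon$. First I would set $\psi_D(x):=|\varphi(x;\theta_0,0)|$ for $x\in\overline D$; by~\eqref{eq:14} this equals $|\varphi(x;\theta,0)|$ for every such $\theta$, it is continuous and nonnegative on $\overline D$, it vanishes exactly on $\overline D\cap Z$, and on $D$ — which lies in $O$, so $\varphi$ does not vanish there — it is real analytic and strictly positive, since $\psi_D=\sqrt{(\mathrm{Re}\,\varphi)^2+(\mathrm{Im}\,\varphi)^2}$ and $\mathrm{Re}\,\varphi,\mathrm{Im}\,\varphi$ are real analytic by the analytic hypoellipticity already used. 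As the coefficients of $H(\theta,V_0)=(i\nabla+A-\theta)^2+V_0$ are real analytic jointly in $(x,\theta)$, the eigenfunction $(x,\theta)\mapsto\varphi(x;\theta,0)$ is real analytic on $D\times\{|\theta-\theta_0|<\varepsilon\}$, hence so is the unit-circle-valued map $F(x,\theta):=\varphi(x;\theta,0)/\psi_D(x)$. Since $D$, being one of the regions into which finitely many pairwise disjoint arcs cutting across the strip partition it, is simply connected, so is $D\times\{|\theta-\theta_0|<\varepsilon\}$; hence $F$ lifts, i.e.\ there is a continuous $g_D\colon D\times\{|\theta-\theta_0|<\varepsilon\}\to\R$ with $e^{ig_D}=F$, unique up to an additive constant in $2\pi\Z$ — which is exactly~\eqref{eq:17} on $D$. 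Composing $F$ locally with a real-analytic branch of the argument (legitimate because $t\mapsto e^{it}$ is a real-analytic covering) shows that $g_D$ is real analytic near each point $(x_0,\theta)$ of $D\times\{|\theta-\theta_0|<\varepsilon\}$.

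Next I would extend $g_D$ continuously to $\overline D$ and prove~\eqref{eq:16}. At any $x_\ast\in\overline D\cap Z$ one has $\varphi(x_\ast;\theta,0)=0$ but $\nabla\varphi(x_\ast;\theta,0)\neq0$: this holds at $\theta_0$ because $\overline{S^\delta_x}$ avoids $Z_0\cup Z_\nabla$, and it persists for $|\theta-\theta_0|<\varepsilon$ because~\eqref{eq:14} forces the order of vanishing of $\varphi(\cdot;\theta,0)$ at a point of the ($\theta$-independent) set $Z$ to be independent of $\theta$. Since $\mathrm{Re}\,\varphi$ and $\mathrm{Im}\,\varphi$ both vanish along the analytic curve component of $Z$ through $x_\ast$, their gradients at $x_\ast$ are parallel, and their common (nonzero) direction is transverse to that curve. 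Straightening the curve locally — it does not depend on $\theta$ — one obtains real-analytic coordinates $(y_1,y_2)$ in which it is $\{y_1=0\}$, $D$ is locally $\{y_1>0\}$, and a Hadamard-type factorization $\varphi=y_1\,a$ holds with $a$ real analytic in $(y,\theta)$ and nowhere vanishing (indeed $a|_{y_1=0}=\partial_{y_1}\varphi|_{y_1=0}\neq0$). Then on the $D$-side $\psi_D=y_1|a|$, hence $F=a/|a|$, which is real analytic on a full neighborhood of $x_\ast$; consequently $g_D$ extends real-analytically up to $\overline D\cap Z$, with $g_D\equiv\arg a$ there. On the parts of $\overline D$ lying on the strip edges one has $\psi_D>0$, so $g_D$ is already continuous up to them; altogether $g_D$ extends continuously to all of $\overline D$. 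If $D'$ is the neighbour to the left of $D$ across such a curve, then $D'$ is locally $\{y_1<0\}$, where $\psi_{D'}=-y_1|a|$ and $F=-a/|a|$, so $g_{D'}\equiv\arg a+\pi$; comparing gives $g_D\equiv g_{D'}+\pi$ modulo $2\pi$ on the connected set $\overline D\cap\overline{D'}$, that is, up to a fixed multiple of $2\pi$. Since there are only $s$ domains per period, linearly ordered along $q\Z e_1$ with no cycle, I would remove these constants by adjusting the lifts $g_D$ inductively along the chain, turning the congruence into the exact identity~\eqref{eq:16}.

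The step I expect to be the real obstacle is the boundary analysis just sketched: one has to control the phase of $\varphi$ precisely along the nodal arcs $C_k$, exactly where $\varphi$ vanishes. This is where the preparatory facts are used in full — that $\overline{S^\delta_x}$ avoids $Z_0\cup Z_\nabla$, so $\varphi$ vanishes only to first order along each $C_k$ with gradient transverse to it, and that $Z\cap\overline{S^\delta_x}$ consists of analytic curves only. Once these are in hand, the covering-space lift, the joint real analyticity in $(x,\theta)$, and the handling of the $2\pi$-ambiguities in the additive constants are routine.
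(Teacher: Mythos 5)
Your proposal is correct and follows essentially the same route as the paper: away from the nodal curves one takes the modulus--argument decomposition of the nonvanishing analytic function $\varphi$, and at a point of a curve $C_k$ one uses the nonvanishing of $\nabla\varphi$ (guaranteed because the strip avoids $Z_0\cup Z_\nabla$) to straighten the curve and factor $\varphi$ as a linear factor times a nonvanishing analytic function, the sign change of the linear factor across the curve producing the jump of $\pi$ in \eqref{eq:16}. You are somewhat more explicit than the paper about the global existence of the lift on the simply connected domain $D$ and about normalizing the residual additive constants in $2\pi\Z$ along the chain of domains, but these are refinements of the same argument, not a different one.
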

  \noindent Before turning to the proof of this result, let us
  complete the proof of Lemma~\ref{le:1}.\\
  Recall that $\varphi(\theta,0)\in\mathcal{H}_{B,p}$ i.e. that
  $W_{q,\gamma'}^B\varphi(\theta,0)=\varphi(\theta,0)$ for all
  $\gamma'\in\Gamma'$. By~\eqref{eq:8}, the definition of
  $W_{q,\gamma'}^B$, the functions coming into the decomposition given
  in Lemma~\ref{le:3} must satisfy, for $\gamma'\in q\Z e_1$ and $x\in
  D_k$
  \begin{equation}
    \label{eq:20}
    g_{\gamma'+D_k}(x+\gamma',\theta)=g_{D_k}(x,\theta)-\frac{B}2x\wedge\gamma'
    -\pi\gamma'_1\gamma'_2
  \end{equation}
  and
  \begin{equation*}
    \psi_{\gamma'+D_k}(x+\gamma')=\psi_{D_k}(x).
  \end{equation*}
  For $D$, one of the domains $(\gamma'+D_k)_{\gamma',k}$, plug the
  representation~\eqref{eq:17} into the eigenvalue
  equation~\eqref{eq:23} to obtain that, on $D$, one has
  \begin{equation*}
    (i\nabla_x-A-\theta-\nabla_x g_D)^2\psi_D+V_0\psi_D=E\psi_D
  \end{equation*}
  where $E=E(\theta,0)$ as it does not depend on $\theta$. As $V_0$,
  $\psi$ and $g$ real valued, we can take the complex conjugate of
  this equation to obtain that, on $O$, one has
  \begin{equation*}
    (i\nabla_x+A+\theta+\nabla_x g_D)^2\psi_D+V_0\psi_D=E\psi_D.
  \end{equation*}
  Summing the last two equations, one finally obtains that, on $D$,
  one has
  \begin{equation*}
    (A+\theta+\nabla_x g_D)^2\psi_D=(E-V_0)\psi_D+\Delta\psi_D
  \end{equation*}
  As $A$, $\psi_D$, $E$ and $V_0$ do not depend on $\theta$, as
  $\psi_D$ does not vanish on $D$, this equation implies that, for
  $x\in D$, the function $\theta\mapsto \theta+\nabla_xg_D(x,\theta)$
  does not depend on $\theta$. Hence, there exists a function
  $x\in\overline{D}\mapsto h_D(x)$ that is real analytic in $D$ and a
  real analytic function $\theta\mapsto c_D(\theta)$ such that, for
  $|\theta-\theta_0|\leq\varepsilon$ and $x\in D$, one has
  \begin{equation}
    \label{eq:24}
    g_D(x,\theta)=-\theta\cdot x+h_D(x)+c_D(\theta).
  \end{equation}
  We note that~(\ref{eq:16}) in Lemma~\ref{le:3} tells us that, if
  $D'$ is to the left of $D$ and
  $\overline{D'}\cap\overline{D}\not=\emptyset$, then we may choose
  \begin{equation}
    \label{eq:22}
    c_D(\theta)=c_{D'}(\theta)+\pi.
  \end{equation}
  We now plug the representation~(\ref{eq:24}) into~\eqref{eq:20} and
  use~(\ref{eq:22}) to obtain that, for
  $|\theta-\theta_0|\leq\varepsilon$, $\gamma'=\gamma'_1 e_1\in p\Z
  e_1$ and $x\in D$,
  \begin{equation*}
    \theta\cdot\gamma'=h_{\gamma'+D}(x)-h_D(x)
    +\frac{B}2x\wedge\gamma'+\pi\gamma'_1\gamma'_2-s\gamma'_1\pi.    
  \end{equation*}
  This is absurd as the left hand side of this expression depends on
  $\theta$ and the right does not.\\
  This completes the proof of Lemma~\ref{le:1}.
\end{proof}
\noindent We now turn to the proof of Lemmas~\ref{le:4}
and~\ref{le:3}.
\begin{proof}[Proof of Lemma~\ref{le:4}]
  First, the set $Z_\nabla\cap C$ is real analytic so can be
  decomposed in the same way as $Z\cap C$. If it does not consist of
  isolated points, then it contains an analytic curve, say, $c$. Pick
  a point $x^0$ in this curve. Near $x^0=(x_1^0,x_2^0)$ assume,
  without restriction, that the curve is parametrized by $x_2=c(x_1)$
  where $c$ is real analytic.\\
  Define the functions $u(x)=\text{Re}(\varphi(x;\theta,0))$ and
  $v(x)=\text{Im}(\varphi(x;\theta,0))$. They are real analytic, real
  valued and satisfy
  \begin{itemize}
  \item as $\varphi(\theta,0)$ is a solution to the eigenvalue
    equation~(\ref{eq:23}),
    \begin{equation}
      \label{eq:21}
      \begin{aligned}
      (-\Delta u)+(A-\theta)^2u+2A\cdot\nabla v=(E-V)u,\\
      (-\Delta v)+(A-\theta)^2v-2A\cdot\nabla u=(E-V)v;        
      \end{aligned}
    \end{equation}
    here, we used div$A=0$;
  \item on $c$, one has
    \begin{equation}
      \label{eq:25}
      0=u=v=\partial_1u=\partial_1v =\partial_2u=\partial_2v
    \end{equation}
    by the definition of $Z_\nabla$.
  \end{itemize}
  Let us prove inductively that, for any $\alpha\in\N^2$,
  $\partial^\alpha u=\partial^\alpha v=0$ on $c$. Assume that, for
  $\alpha_1+\alpha_2\leq N$, one has
  $\partial_1^{\alpha_1}\partial_2^{\alpha_2}u=
  \partial_1^{\alpha_1}\partial_2^{\alpha_2}v=0$. Let us prove that it
  still holds for $\alpha_1+\alpha_2= N+1$.\\
  For $\alpha_1+\alpha_2\leq N+1$, differentiating $\alpha_1-1$ times
  equations~(\ref{eq:21}) in $x_1$ and $\alpha_2-1$ times in $x_2$
  yields that, on $c$, one has
  \begin{equation}
    \label{eq:27}
      \begin{aligned}
      \partial_1^{\alpha_1+1}\partial_2^{\alpha_2-1}u+
      \partial_1^{\alpha_1-1}\partial_2^{\alpha_2+1}u
      =\sum_{\beta_1+\beta_2\leq N}a_{\beta_1\beta_2}
      \partial^\beta u+b_{\beta_1\beta_2}\partial^\beta v=0,\\
      \partial_1^{\alpha_1+1}\partial_2^{\alpha_2-1}v+
      \partial_1^{\alpha_1-1}\partial_2^{\alpha_2+1}v
      =\sum_{\beta_1+\beta_2\leq N}c_{\beta_1\beta_2}
      \partial^\beta u+d_{\beta_1\beta_2}\partial^\beta v=0.
      \end{aligned}
  \end{equation}
  Differentiating $\partial_1^{\alpha_1}\partial_2^{\alpha_2}u=0$
  along $c$, we get
  \begin{equation}
    \label{eq:26}
    \left(\partial_1^{\alpha_1+1}\partial_2^{\alpha_2}u\right)(x_1,c(x_1))+
    c'(x_1)\left(\partial_1^{\alpha_1}\partial_2^{\alpha_2+1}u\right)(x_1,c(x_1))=0
  \end{equation}
  Using this for $(\alpha_1,\alpha_2)=(N,0)$ and
  $(\alpha_1,\alpha_2)=(N-1,1)$ and the first equation
  in~(\ref{eq:27}) for $(\alpha_1,\alpha_2)=(N,1)$, we get the system
  \begin{equation*}
    \begin{cases}
      \partial_1^{N+1}u+c'\partial_1^N\partial_2u&=0\\
      \partial_1^{N}\partial_2u+c'\partial_1^{N-1}\partial_2^2u&=0\\
      \partial_1^{N+1}u+c'\partial_1^{N-1}\partial^2_2u&=0
    \end{cases}
  \end{equation*}
  which implies that
  \begin{equation*}
    \partial_1^{N+1}u=\partial_1^N\partial_2u=\partial_1^{N-1}\partial^2_2u=0.
  \end{equation*}
  Let us assume that $c'(x_1)\not=0$. Then, using~(\ref{eq:27})
  inductively, we get that
  $\partial_1^{N+1-\alpha}\partial_2^{\alpha}u=0$ for all $0\leq
  \alpha\leq N+1$.\\
  If $c'$ does not vanish on the whole curve, we just work near a
  point where it does not vanish. If $c'$ vanishes on the whole curve,
  then the curve is a straight horizontal line, say, $x_2=0$ and we
  proceed as follows. By differentiation of~(\ref{eq:25}), we
  immediately get that, on $c$, one has
  \begin{equation*}
   \partial_1^{N+1}u=\partial_1^N\partial_2u=
   \partial_1^{N+1}v=\partial_1^N\partial_2v=0 
  \end{equation*}
  Then,~(\ref{eq:27}) and the induction assumption yield, for $0\leq
  \alpha\leq N$,
  \begin{gather*}
    0=-\partial_1^{N+1-\alpha}\partial_2^\alpha u=
    \partial_1^{N-\alpha-1}\partial_2^{\alpha+2}u,\\
    0=-\partial_1^{N+1-\alpha}\partial_2^\alpha v=
    \partial_1^{N-\alpha-1}\partial_2^{\alpha+2}v.
  \end{gather*}
  Finally we proved that, if $Z_\nabla\cap C$ contains a curve, the
  functions $(\partial^{\alpha}_x)\varphi(\theta,0)$ vanish
  identically on this curve. As $\varphi(\theta,0)$ is real analytic,
  this implies that this function vanishes identically which
  contradicts the assumption that its norm in $\mathcal{H}_{B,p}$ is
  $1$.\\
  This completes the proof of Lemma~\ref{le:4}.
\end{proof}
\begin{proof}[Proof of Lemma~\ref{le:3}] Clearly, in the domains
  $(D_k)_{\leq k\leq s}$ and their translates, the
  decomposition~(\ref{eq:17}) is the decomposition into argument and
  modulus of the complex number $\varphi(x;\theta,0)$. As
  $\varphi(x;\theta,0)$ does not vanish, its argument and modulus are
  also real analytic. So we only need to study what happens at the
  crossing of one of the curves $(C_k)_{\leq k\leq s}$. So, we study
  $x\mapsto\varphi(x;\theta,0)$ near $x^0\in C_k$.\\
  As $S^\delta_x\cap (Z_0\cup Z_\nabla)=\emptyset$, we know that
  $\nabla\varphi(x^0,\theta,0)\not=0$. Using the notation of the proof
  of Lemma~\ref{le:4} i.e.  $u(x)=\text{Re}(\varphi(x;\theta,0)$ and
  $v(x)=\text{Im}(\varphi(x;\theta,0)$, we may assume that $\nabla
  u(x^0)\not=0$. As the curve $C_k$ is vertical, we know that
  $\partial_1u(x^0)\not=0$. We can then find a real analytic change of
  variables that maps a neighborhood of $x^0$ into a neighborhood of
  $0$ and that maps the set $\{x;\ u(x)=0\}$ into the straight line
  $\{x_1=0\}$. We perform this change of variables on $u$ and $v$ and
  call the function thus obtained again $u$ and $v$. Then, in a
  neighborhood of $0$, one has that
  \begin{equation}
    \label{eq:28}
    u(x_1,x_2)=0\Leftrightarrow
    x_1=0,\quad \partial_1u(0,0)\not=0,\quad v(0,x_2)=0.
  \end{equation}
  The functions $u$ and $v$ being real analytic, we can write them as
  \begin{equation*}
    u(x_1,x_2)=\tilde w(x_2)+x_1w(x_1,x_2)\quad\text{ and}\quad 
    v(x_1,x_2)=\tilde t(x_2)+x_1t(x_1,x_2)
  \end{equation*}
  where all the functions are real analytic.\\
  Then,~(\ref{eq:28}) implies then that
  \begin{equation*}
    w(0,0)\not=0,\quad \tilde w(x_2)=\tilde t(x_2)=0\text{ identically}.   
  \end{equation*}
  Hence, we obtain that
  \begin{equation*}
    (u+iv)(x_1,x_2)=x_1(w+it)(x_1,x_2)\text{ where }|(w+it)(0,0)|\not=0.
  \end{equation*}
  Changing back to the initial variables, if $x_1\mapsto c(x_1)$ is a
  parametrization of the curve $C_k$ near $x^0$, we see that, in $U$,
  a neighborhood of $x^0$, we can write
  \begin{equation*}
    \varphi(x;\theta,0)=(x_2-c(x_1))\psi(x)\text{ where }\psi(x^0)\not=0.
  \end{equation*}
  Hence, for $x\in D_k\cap U$, one has
  \begin{equation*}
    e^{i g_{D_k}(x;\theta)}\psi_{D_k}(x)=(x_2-c(x_1))\psi(x),\quad
    x_2\geq c(x_1)
  \end{equation*}
  and for $x\in D_{k-1}\cap U$, one has
  \begin{equation*}
    e^{i g_{D_{k-1}}(x;\theta)}\psi_{D_{k-1}}(x)=(x_2-c(x_1))\psi(x)=
    -(c(x_1)-x_2)\psi(x),\quad  x_2\leq c(x_1) .
  \end{equation*}
  This implies that we can continue $g_{D_{k-1}}$ and $g_{D_k}$
  continuously up to the boundary $C_k$ and that they satisfy the
  relation~(\ref{eq:16}) on $C_k$.\\
  This completes the proof of Lemma~\ref{le:3}.
\end{proof}
%
%
\def\cprime{$'$} \def\cydot{\leavevmode\raise.4ex\hbox{.}}

\end{document}